\documentclass[9pt,journal]{IEEEtran}

\usepackage{amsmath,amsfonts,amssymb,mathtools}
\usepackage{hhline}
\usepackage{cite}
\usepackage{graphicx}
\usepackage{psfrag}
\usepackage{subfigure}
\usepackage{stfloats}
\interdisplaylinepenalty=2500

\usepackage{array}
\DeclareMathAlphabet\mathbfcal{OMS}{cmsy}{b}{n}

\hyphenation{op-tical net-works semi-conduc-tor}

\newtheorem{Lm}{Lemma}

\newtheorem{remark}{Remark}
\newtheorem{exmp}{Example}

\newcommand{\tr}[1]{\mathop{{\rm \bf tr}\left[#1\right]}\nolimits}
\newcommand{\E}[1]{\mathop{{\rm \bf E}\left\{#1\right\}}\nolimits}

\newcommand{\D}[1]{\mathop{{\rm d}#1}}                  
\newcommand{\DD}[1]{\mathop{\left( {\rm d}#1 \right)}}  

\newcommand{\Di}[1]{\mathop{\partial_i #1}}                  
\newcommand{\DDi}[1]{\mathop{\left( \partial_i #1 \right)}}  

\usepackage{clrscode}
\makeatletter
\g@addto@macro\code@init\normalfont
\makeatother

\begin{document}
\title{SVD-based Kalman Filter Derivative Computation}

\author{J.V.~Tsyganova and M.V.~Kulikova
\thanks{Manuscript received ??; revised ??.  The second author thanks the support of Portuguese National Fund ({\it Funda\c{c}\~{a}o para a
Ci\^{e}ncia e a Tecnologia})  within the scope of project UID/Multi/04621/2013.}
\thanks{The first author is with Ulyanovsk State University,  Str. L. Tolstoy 42, 432017 Ulyanovsk, Russian Federation. The second author is with
CEMAT (Center for Computational and Stochastic Mathematics), Instituto Superior T\'{e}cnico, Universidade de Lisboa,
          Av. Rovisco Pais 1,  1049-001 LISBOA, Portugal;
          Emails:
TsyganovaJV@gmail.com; maria.kulikova@ist.utl.pt}}

\markboth{PREPRINT}{}

\maketitle

\begin{abstract}
Recursive adaptive filtering methods are often used for solving the problem of simultaneous state and parameters estimation arising in many areas of research. The gradient-based schemes for adaptive Kalman filtering (KF) require the corresponding filter sensitivity computations. The standard approach is based on the direct differentiation of the KF equations. The shortcoming of this strategy is a numerical instability of the conventional KF (and its derivatives) with respect to roundoff errors. For decades, special attention has been paid in the KF community for designing efficient filter implementations that improve robustness of the estimator against roundoff. The most popular and beneficial techniques are found in the class of square-root (SR) or UD factorization-based methods. They imply the Cholesky decomposition of the corresponding error covariance matrix. Another important matrix factorization method is the singular value decomposition (SVD) and, hence,  further encouraging KF algorithms might be found under this approach. Meanwhile, the filter sensitivity computation heavily relies on the use of matrix differential calculus. Previous works on the robust KF derivative computation have produced the SR- and UD-based methodologies. Alternatively, in this paper we design the SVD-based approach. The solution is expressed in terms of the SVD-based KF {\it covariance} quantities and their derivatives (with respect to unknown system parameters). The results of numerical experiments illustrate that although the newly-developed SVD-based method is algebraically equivalent to the conventional approach and the previously derived SR- and UD-based strategies, it outperforms the mentioned techniques for estimation accuracy in ill-conditioned situations.
\end{abstract}

\begin{keywords}
Kalman filter, filter sensitivity equations, SVD factorization, array algorithms.
\end{keywords}

\IEEEpeerreviewmaketitle

\section{Introduction}

The problem of filter sensitivities evaluation plays a key role in many areas of research; for instance, in state estimation and parameter identification realm~\cite{mehra1972,Bastin1988}, in the field of optimal input design~\cite{Mehra1974,Gupta1974}, in information theory for computing the Fisher information matrix~\cite{zadrozny1994,klein1995,klein2000} {\it etc}. In this paper we explore linear discrete-time stochastic systems where the associated Kalman filter (KF) is used for estimating the unknown dynamic states. Therefore, the standard approach for computing the filter sensitivities (with respect to unknown system parameters) is a direct differentiation of the KF equations. This conventional methodology is comprehensively studied in~\cite{Mehra1974,zadrozny1989,klein2000direct}. The shortcoming of this strategy is a numerical instability of the conventional KF (and its derivatives) with respect to roundoff errors discussed in~\cite{VerhaegenDooren1986,Verhaegen1989}. Due to this fact, special attention has been paid in the KF community for designing robust KF implementation methods. The most popular techniques belong to the class of square-root (SR) or UD factorization-based methods; see~\cite{KaminskiBryson1971,Bierman1977,Sayed1994,ParkKailath1995} and many others. These algorithms imply the Cholesky decomposition and its modification for the corresponding covariance matrix factorization~\cite{Bierman1977,KailathSayed2000,GrewalAndrews2001}. We may note that the Cholesky decomposition exists and is unique when the symmetric matrix to be decomposed is positive definite~\cite{Golub1983}. If it is a positive semi-definite, then the Cholesky decomposition still exists, however, it is not unique~\cite{Higham1990}. Further encouraging KF implementation methods might be found with the use of singular value decomposition (SVD). Some evidences of better estimation quality obtained under the SVD-based approach exist in the field of nonlinear filtering; for instance, see discussion in~\cite{Zhang2008,straka2013,Zhang2014} and others. For linear filtering problem examined in this paper, the first SVD-based KF was, to the best of our knowledge, designed in~\cite{WangSVD1992}. Our recent analysis exposes that the mentioned SVD-based filter can be further improved for enhancing its numerical robustness. This result is comprehensively studied in~\cite{KulikovaOnline}, where some new stable SVD-based KF implementations are designed.

Despite the existence of inherently more stable SR-, UD- and SVD-based KF variants, the problem of robust filter derivative computation is seldom addressed in practice because of its complicated matter. The solution to the mentioned problem heavily relies on the use of matrix differential calculus. The first SR-based {\it information-type} algorithm for the KF derivative computations belongs to Bierman {\it et al.} and was appeared in 1990; see~\cite{Bierman1990}. Alternatively, the SR-based {\it covariance-type} method was proposed in~\cite{Kulikova2009IEEE} as well as the UD-based scheme designed in~\cite{Tsyganova2013IEEE}. Later on, a general ``differentiated'' SR-based methodology was designed for both orthogonal and J-orthogonal transformations involved in the filtering equations (and their derivatives) in~\cite{Kulikova2013IEEE,Kulikova2015,Kulikova2016}. Alternatively, in this technical note we develop the SVD-based approach for the KF derivative computation. We show that the new technique is algebraically equivalent to the conventional ``differentiated'' KF, but it improves the robustness against roundoff errors as well as the existing ``differentiated'' SR- and UD-based methodologies. However, motivated by the results obtained in nonlinear filtering realm, we expect that the newly-designed SVD-based method outperforms the previously derived algorithms while solving the parameters estimation problem, especially when the error covariance is ill-conditioned.

\section{Filter sensitivity equations: conventional approach} \label{sec:state}
Consider the state-space equations
  \begin{align}
   x_{k} & =  F(\theta) x_{k-1}\!+B(\theta)u_{k-1}\!+G(\theta) w_{k-1}, \quad k \ge 1,  \label{eq:st:1} \\
    z_k    & =  H(\theta) x_k+v_k,  \;  v_k  \sim {\mathcal N}\left(0,R(\theta)\right), \; w_k \sim {\mathcal N}\left(0,\Omega(\theta)\right)             \label{eq:st:2}
  \end{align}
where $z_k \in \mathbb R^m$, $u_k \in \mathbb R^d$, $x_k \in \mathbb R^n$ and $\theta \in \mathbb R^p$ are,
respectively, the vectors of available measurements, the known deterministic control input, the unknown dynamic state and the unknown system parameters that need to be estimated from the available experimental data, $\{z_1, \ldots, z_N \}$. The process and the measurement noises are independent Gaussian zero-mean white-noise processes that are also independent from the initial state $x_0 \sim {\mathcal N}\left(\bar x_0,\Pi_0(\theta)\right)$. The covariances are assumed to be $\Omega(\theta) \ge 0$, $R(\theta) > 0$ and $\Pi_0(\theta) \ge 0$.

Equations~\eqref{eq:st:1}, \eqref{eq:st:2} represent a set of the state-space models (SSMs). Each of them corresponds to a particular system parameter value. This means that for any fixed value of $\theta$, say $\hat \theta^*$, the system matrices are known, i.e. there is no uncertainty in model~\eqref{eq:st:1}, \eqref{eq:st:2}. For simplicity, throughout the paper we write $F$ {\it etc.} instead of $F(\hat \theta^*)$ {\it etc.} when evaluating at the fixed point $\hat \theta^*$. The associated KF yields the linear minimum least-square estimate of the unknown dynamic state that can be recursively computed via the equations~\cite[Theorem~9.2.1]{KailathSayed2000}:
\begin{align}
e_k & =  z_k-H\hat x_{k|k-1}, \qquad \hat x_{1|0}  =  \bar x_0,  \quad  k  \ge 1,  \label{kf:1} \\
K_{p,k} & = FP_{k|k-1}H^TR_{e,k}^{-1}, \quad R_{e,k}  = R+HP_{k|k-1}H^T, \label{kf:2} \\
\hat x_{k+1|k} & =   F \hat x_{k|k-1}+Bu_k+K_{p,k}e_k    \label{kf:3}
\end{align}
where $\{ e_k \}$ are innovations of the discrete-time KF. The important property of the KF for Gaussian SSMs is $e_k \sim {\cal N}\left(0,R_{e,k}\right)$. The $P_{k|k-1}=\E{ (x_{k}-\hat x_{k|k-1})(x_{k}-\hat x_{k|k-1})^{T}}$ is the one-step ahead predicted error covariance matrix computed as follows:
\begin{equation}
P_{k+1|k} = FP_{k|k-1}F^T\!+G\Omega G^T \! - K_{p,k}R_{e,k}K_{p,k}^T, \; P_{1|0} \! = \Pi_0. \label{kf:4}
\end{equation}

The conventional approach for deriving the related sensitivity model is based on differentiation of the corresponding filtering equations. Let $A(\theta) \in \mathbb R^{m\times n}$, $B(\theta) \in \mathbb R^{n\times q}$ be matrices, which entries are differentiable functions of the parameter vector $\theta \in \mathbb R^{p}$. The $m \times n$ matrix
$\Di{A} = \partial A/ \partial \theta_i$ implies the partial derivative of the $A$ with respect to the $i$-th component of $\theta$, $i=1, \ldots p$.  The $m \times n$ matrix $\D{A} = \sum_{i=1}^{p} \DDi{A} \cdot \DD{\theta_i}$ is the differential form of first-order derivatives of $A(\theta)$.  Taking into account the matrix product rule of differentiation~\cite[p. 955]{neudecker1969}: $\D{\left(AB\right)} = \DD{A}B + A\DD{B}$, and the fact $\D I = 0$, we derive $\D{\left(A^{-1}\right)} = -A^{-1} \DD{A} A^{-1}$ for any square and invertible matrix $A$ (it is also known as the Jacobi's formula);
 see also~\cite[p. 546]{zadrozny1989}. Using these differentiation rules, the necessary differentials of~\eqref{kf:1}-\eqref{kf:4}
can be written as follows~\cite{zadrozny1989,klein2000direct}:
\begin{align}
\D{e_k} & =  -\left[ \DD{H}\hat x_{k|k-1} + H\DD{\hat x_{k|k-1}}\right],    \label{diff:kf:1} \\
\D{\hat x_{k+1|k}} & =   \DD{F} \hat x_{k|k-1}+ F \DD{\hat x_{k|k-1}} + \DD{B} u_{k} \nonumber \\
  & + \DD{K_{p,k}}e_k  + K_{p,k}\DD{e_k},   \label{diff:kf:2} \\
\D{K_{p,k}} & = \DD{F}P_{k|k-1}H^TR_{e,k}^{-1}+F\DD{P_{k|k-1}}H^TR_{e,k}^{-1} \nonumber \\
& + FP_{k|k-1}\DD{H^T}R_{e,k}^{-1} \nonumber \\
& - FP_{k|k-1}H^TR_{e,k}^{-1}\DD{R_{e,k}}R_{e,k}^{-1}, \label{diff:kf:3a} \\
 \D{R_{e,k}} & = \D{R}+\DD{H}P_{k|k-1}H^T+H\DD{P_{k|k-1}}H^T \nonumber \\
& +HP_{k|k-1}\DD{H^T}, \label{diff:kf:3b} \\
\D{P_{k+1|k}}  & = \DD{F}P_{k|k-1}F^T + F\DD{P_{k|k-1}}F^T \nonumber \\
& + FP_{k|k-1}\DD{F^T} +\DD{G}\Omega G^T + G\DD{\Omega}G^T  \nonumber \\
& + G\Omega \DD{G^T} - \DD{K_{p,k}}R_{e,k}K_{p,k}^T \nonumber \\
& - K_{p,k}\DD{R_{e,k}}K_{p,k}^T - K_{p,k}R_{e,k}\DD{K_{p,k}^T}. \label{diff:kf:4}
\end{align}
In deriving the equations above we take into account that $\D{z_k} = 0$ and $\D{u_k} = 0$, because the observations $z_k$ and the control input $u_k$ do not depend on the parameters (i.e. their realizations are independent of variations in $\theta$) and therefore have a differential equal to zero.

We may also note that except for the scalar factor $\D{\theta_i}$, $\Di{A}$ is a special case of $\D{A}$, so that to obtain partial-derivative forms from differential forms, we only have to everywhere replace operator $\D{\left(\cdot\right)}$ with $\Di{\left(\cdot\right)}$ for
$i=1, \ldots p$~\cite[p. 546]{zadrozny1989}.  Hence, from~\eqref{diff:kf:1}~-- \eqref{diff:kf:4} we obtain a set of $p$ vector equations, known as the {\it filter sensitivity equations}, for computing $\Di{\hat x_{k+1|k}}$, $i=1, \ldots p$, and a set of $p$ matrix equations, known as the {\it Riccati-type sensitivity equations}, for computing $\Di{P_{k+1|k}}$, $i=1, \ldots p$. This approach for the KF sensitivity model derivation is called the ``differentiated KF''. Its main drawback is a numerical instability of the conventional KF~\eqref{kf:1}~-- \eqref{kf:4} and inherently its derivative~\eqref{diff:kf:1}~-- \eqref{diff:kf:4} with respect to roundoff errors.

The goal of this paper is to design a robust methodology for updating the ``differentiated'' KF equations above in terms of SVD factors (and their derivatives) of the error covariance matrices $P_{k|k-1}$ instead of using the full matrices $P_{k|k-1}$ (and their derivatives).

\section{SVD factorization-based Kalman filtering \label{filters}}

To the best of our knowledge, the first SVD-based KF was by Wang {\it et al.} and appeared in 1992; see Eqs~(17), (22), (23) in~\cite[pp.~1225-1226]{WangSVD1992}.  Our recent research shows that although that implementation is inherently more stable than the KF~\eqref{kf:1}~-- \eqref{kf:4}, it is still sensitive to roundoff and poorly treats ill-conditioned problems. The cited analysis exposes that the SVD-based filter can be further improved for enhancing its numerical robustness. This result is comprehensively studied in~\cite{KulikovaOnline}, where new stable SVD-based KF implementations are designed. The readers are referred to the cited paper for the detailed derivations, numerical stability discussion and proofs. Here, we briefly outline the principle steps for construction of the most advanced SVD-based KF variant. Next, we extend it to a stable filter sensitivities computation, which is the main purpose of this study.

Consider the SVD factorization~\cite[Theorem~2.8.1]{Tyrtyshnikov2012}: suppose $A \in {\mathbb C}^{m\times n}$, ${\rm rank}\: A = r$. There exist positive numbers $\sigma_1\geq\ldots\sigma_r>0$ and unitary matrices $W \in {\mathbb C}^{m\times m}$ and $V \in {\mathbb C}^{n\times n}$ such that
\[
A  = W\Sigma V^*, \,
\Sigma =
\begin{bmatrix}
S & 0 \\
0 & 0
\end{bmatrix} \in {\mathbb C}^{m\times n},  \; S={\rm diag}\{ \sigma_1,\ldots,\sigma_r\}
\]
where $V^*$ is the conjugate transpose of $V$.

The diagonal entries of $\Sigma$ are known as the singular values of $A$. The non-zero $\sigma_i$ ($i=1, \ldots, r$) are the square roots of the non-zero eigenvalues of both $A^*A$ and $AA^*$.

If $A$ is a square matrix such that $A^*A=AA^*$, then the $A$ can be diagonalized using a basis of eigenvectors according to the spectral theorem, i.e. it can be factorized as follows: $A = QDQ^*$  where $Q$ is a unitary matrix and $D$ is a diagonal matrix, respectively. If $A$ is also positive semi-definite, then the spectral decomposition above, $A = QDQ^*$, is also a SVD factorization, i.e. the diagonal matrix $D$ contains the singular values of $A$. For the SSMs examined in this paper, the initial error covariance $\Pi_0 \in {\mathbb R}^n$ is a symmetric positive semi-definite matrix and, hence, the spectral decomposition implies $\Pi_0 = Q_{\Pi_0}D_{\Pi_0}Q_{\Pi_0}^T$  where $Q_{\Pi_0}$  and $D_{\Pi_0}$ are the orthogonal and diagonal matrices, respectively. It is also a SVD factorization, i.e. the factor $D_{\Pi_0}$ contains the singular values of $\Pi_0$.

Now, we are ready to present the SVD-based KF implementation developed recently in~\cite{KulikovaOnline}. Instead of conventional recursion~\eqref{kf:1}-\eqref{kf:4} for $P_{k|k-1}$, we update only their SVD factors, $\{Q_{P_{k|k-1}}, D^{1/2}_{P_{k|k-1}}\}$, at each iteration step of the filter as shown below.

\textsc{Initial Step}  ($k=0$). Apply the SVD factorization for the initial error covariance matrix $\Pi_0 = Q_{\Pi_0}D_{\Pi_0}Q_{\Pi_0}^T$ and, additionally, for the process and measurement noise covariances: $\Omega = Q_{\Omega}D_{\Omega}Q_{\Omega}^T$ and $R = Q_{R}D_{R}Q_{R}^T$ , respectively. Set the initial values as follows: $Q_{P_{1|0}} = Q_{\Pi_0}$, $D^{1/2}_{P_{1|0}} = D^{1/2}_{\Pi_0}$ and $\hat x_{1|0} = \bar x_0$.

\textsc{Measurement Update}  ($k=1, \ldots, N$). Build the pre-arrays from the filter quantities that are currently available and, then, apply the SVD factorizations in order to obtain the corresponding SVD factors of the updated filter quantities as follows:
\begin{align}
\underbrace{
\begin{bmatrix}
D^{1/2}_{R} Q^T_{R} \\
D^{1/2}_{P_{k|k-1}}Q^T_{P_{k|k-1}}H^T
\end{bmatrix}
}_{\rm Pre-array}
 =
\underbrace{
\mathfrak{W}_{MU}^{(1)}
\begin{bmatrix}
D_{R_{e,k}}^{1/2} \label{svd:1} \\
0
\end{bmatrix}
Q_{R_{e,k}}^T
}_{\rm Post-array \: SVD \:factors},  \\
\bar K_{k} = \left(Q_{P_{k|k-1}}D_{P_{k|k-1}}Q^T_{P_{k|k-1}}\right)H^TQ_{R_{e,k}},  \label{svd:K} \\
\underbrace{
\begin{bmatrix}
D_{P_{k|k-1}}^{1/2}Q_{P_{k|k-1}}^T\left(I - K_k H\right)^T \\
D^{1/2}_{R} Q^T_{R}  K_{k}^T
\end{bmatrix}
}_{\rm Pre-array}
 =
\underbrace{
\mathfrak{W}_{MU}^{(2)}
\begin{bmatrix}
D_{P_{k|k}}^{1/2} \\
0
\end{bmatrix}
Q_{P_{k|k}}^T
}_{\rm Post-array \: SVD \:factors} \label{svd:2}
\end{align}
where we denote $K_k = \bar K_k D^{-1}_{R_{e,k}}Q^T_{R_{e,k}}$. The matrices $\mathfrak{W}_{MU}^{(1)} \in {\mathbb R}^{(m+n)\times (m+n)}$, $Q_{R_{e,k}} \in {\mathbb R}^{m\times m}$ and $\mathfrak{W}_{MU}^{(2)} \in {\mathbb R}^{(n+m)\times (n+m)}$, $Q_{P_{k|k}} \in {\mathbb R}^{n\times n}$ are the orthogonal matrices of the corresponding SVD factorizations in~\eqref{svd:1}, \eqref{svd:2}. Next, $D_{R_{e,k}}^{1/2} \in {\mathbb R}^{m\times m}$ and $D_{P_{k|k}}^{1/2} \in {\mathbb R}^{n\times n}$ are diagonal matrices with square roots of the singular values of $R_{e,k}$ and $P_{k|k}$, respectively.

It can be easily seen that the required SVD factors of the innovation covariance $R_{e,k}$, i.e. $\{Q_{R_{e,k}}, D_{R_{e,k}}^{1/2}\}$, and {\it a posteriori} error covariance matrix $P_{k|k}$, i.e. $\{Q_{P_{k|k}}, D_{P_{k|k}}^{1/2}\}$, are directly read-off from the post-array factors in~\eqref{svd:1} and \eqref{svd:2}, respectively. Finally, find {\it a posteriori} estimate $\hat x_{k|k}$ through equations
\begin{equation}
\hat x_{k|k}  = \hat x_{k|k-1}\!+\bar K_{k}D^{-1}_{R_{e,k}}\bar e_k, \; \; \bar e_k  =  Q_{R_{e,k}}^T\!\!\left(z_k-H\hat x_{k|k-1}\right)\!. \label{svd:3}
\end{equation}

\textsc{Time Update} ($k=1, \ldots, N$).
 Build the pre-array and apply the SVD factorization to obtain {\it a priori} error covariance SVD factors $\{Q_{P_{k+1|k}}, D_{P_{k+1|k}}^{1/2}\}$ as follows:
\begin{align}
\underbrace{
\begin{bmatrix}
D^{1/2}_{P_{k|k}}Q^T_{P_{k|k}}F^T\\
D^{1/2}_{\Omega} Q^T_{\Omega} G^T
\end{bmatrix}
}_{\rm Pre-array}
=
\underbrace{
\mathfrak{W}_{TU}
\begin{bmatrix}
D_{P_{k+1|k}}^{1/2} \\
0
\end{bmatrix}
Q_{P_{k+1|k}}^T
}_{\rm Post-array \: SVD \:factors} \label{svd:5}
\end{align}
 and find {\it a priori} estimate $\hat x_{k+1|k}$ as follows:
\begin{align}
\hat x_{k+1|k} & = F\hat x_{k|k} + Bu_{k}. \label{svd:6}
\end{align}

The SVD-based KF implementation above is formulated in two-stage form. Meanwhile, following~\cite{ParkKailath1995}, the conventional KF~\eqref{kf:1}~--\eqref{kf:4} is expressed in the so-called ``condensed'' form. Nevertheless, these KF variants are algebraically equivalent. It is easy to prove if we take into account the SVD factorization $A=\mathfrak{W} \Sigma \mathfrak{V}^T$ and the properties of orthogonal matrices. Indeed, for each pre-array to be decomposed we have $A^TA = (\mathfrak{V} \Sigma \mathfrak{W}^T)(\mathfrak{W}\Sigma \mathfrak{V}^T) = \mathfrak{V} \Sigma^2 \mathfrak{V}^T$. Next, by comparing both sides of the obtained matrix equations, we come to the corresponding SVD-based KF formulas. The detailed derivation can be found in~\cite{KulikovaOnline}.


\section{Filter sensitivity equations: SVD-based approach} \label{Sec:SVD:diff}

To begin constructing the ``differentiated'' SVD-based method for computing the filter sensitivities, we pay attention to the underlying SVD-based filter and remark that it is formulated in the so-called array form. This makes the modern KF algorithms better suited to parallel implementation and to very large scale integration (VLSI) implementation as mentioned in~\cite{ParkKailath1995}. Each iteration of the SVD-based filter examined has the following pattern: given a pre-array $A \in {\mathbb R}^{(k+s)\times s}$, compute the post-array SVD factors $\mathfrak{W} \in {\mathbb R}^{(k+s)\times (k+s)}$, $\Sigma \in {\mathbb R}^{(k+s)\times s}$ and $\mathfrak{V} \in {\mathbb R}^{s\times s}$ by means of the SVD factorization
\begin{align} \label{numSVD}
A & = \mathfrak{W}\:\Sigma \:\mathfrak{V}^T, \quad
\Sigma =
\begin{bmatrix}
S  \\
0
\end{bmatrix},  \quad S={\rm diag}\{ \sigma_1,\ldots,\sigma_s\}
\end{align}
where the matrix $A$ is of full column rank, i.e. ${\rm rank} \:A = s$; the $\mathfrak{W}$, $\mathfrak{V}$ are orthogonal matrices and $S$ is a diagonal matrix with singular values of the pre-array $A$.

The goal of our study is to develop the method that naturally extends formula~\eqref{numSVD} on the post-array SVD factors' derivative computation. More precisely, the computational procedure is expected to utilize the pre-array $A$ and its derivative $A'_{\theta}$ for reproducing the SVD post-arrays $\{\mathfrak{W}, \Sigma, \mathfrak{V}\}$ together with their derivatives $\{\mathfrak{W}'_{\theta}, \Sigma'_{\theta}, \mathfrak{V}'_{\theta}\}$. To achieve our goal, we prove the result presented below. We also bear in mind that the SVD post-array factor $\mathfrak{W}$ is of no interest in the presented SVD-based KF for performing the next step of the filter recursion and, hence, the quantity $\mathfrak{W}'_{\theta}$ is not required to be computed.

\begin{Lm}
\label{lemma:1} Consider the SVD factorization in~\eqref{numSVD}.
Let entries of the pre-array $A(\theta)$ be known differentiable functions of a scalar parameter $\theta$. We assume that $\sigma_i(\theta) \ne \sigma_j(\theta)$, $j\ne i$, for all $\theta$. Given the derivative of the pre-array, $A'_{\theta}$, the following formulas calculate the corresponding derivatives of the post-arrays:
\begin{align}
\Sigma'_{\theta} & =
\begin{bmatrix}
S'_{\theta}  \\
0
\end{bmatrix},  & S'_{\theta} & = {\rm diag} \left[\mathfrak{W}^T A'_{\theta} \mathfrak{V}\right]_{s\times s}, \label{lemma:d:1} \\
\mathfrak{V}'_{\theta} & = \mathfrak{V} \left[ \bar L_2^T-\bar L_2 \right] && \label{lemma:d:2}
\end{align}
where $ \left[\mathfrak{W}^T A'_{\theta} \mathfrak{V}\right]_{s\times s}$ denotes the main $s \times s$ block of the matrix product $\mathfrak{W}^T A'_{\theta} \mathfrak{V}$, and
$\bar L_2$ is a strictly lower triangular matrix, which entries are computed as follows:
\begin{align}\label{barL2}
(\bar l_2)_{ij} & =\displaystyle\frac{{\bar u}_{ji}\sigma_j+{\bar l}_{ij}\sigma_i}{\sigma_i^2-\sigma_j^2},
& i & = 2,\ldots, s, \; j=1,\ldots, i-1.
\end{align}
In equation above, the quantities ${\bar u}_{ji}$ and ${\bar l}_{ji}$ denote the entries of matrices $\bar L$ and $\bar U$, respectively. The $\bar L$, $\bar U$ are strictly lower and upper triangular parts of the matrix product $\left[\mathfrak{W}^T A'_{\theta} \mathfrak{V}\right]_{s\times s}$, respectively.
\end{Lm}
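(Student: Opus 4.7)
The plan is to differentiate the identity $A=\mathfrak{W}\Sigma\mathfrak{V}^T$ and convert it, by multiplication on the left by $\mathfrak{W}^T$ and on the right by $\mathfrak{V}$, into a linear matrix equation whose block and symmetry structure allows us to read off the desired derivatives. First, I would differentiate to obtain
\begin{equation*}
A'_\theta = \mathfrak{W}'_\theta\,\Sigma\,\mathfrak{V}^T + \mathfrak{W}\,\Sigma'_\theta\,\mathfrak{V}^T + \mathfrak{W}\,\Sigma\,(\mathfrak{V}^T)'_\theta,
\end{equation*}
and then form
\begin{equation*}
M := \mathfrak{W}^T A'_\theta\,\mathfrak{V} = \Omega_W\,\Sigma + \Sigma'_\theta - \Sigma\,\Omega_V,
\end{equation*}
where $\Omega_W := \mathfrak{W}^T\mathfrak{W}'_\theta$ and $\Omega_V := \mathfrak{V}^T\mathfrak{V}'_\theta$. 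Differentiating the orthogonality conditions $\mathfrak{W}^T\mathfrak{W}=I$ and $\mathfrak{V}^T\mathfrak{V}=I$ gives that both $\Omega_W$ and $\Omega_V$ are skew-symmetric; in particular their diagonals vanish. Since $\Sigma$ has the block form $[S^T\;0]^T$, so does $\Sigma'_\theta$, and this already yields $\mathfrak{V}'_\theta=\mathfrak{V}\Omega_V$ once $\Omega_V$ is known.

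Next I would partition $\Omega_W = \bigl[\begin{smallmatrix}\Omega_W^{11} & \Omega_W^{12}\\ \Omega_W^{21} & \Omega_W^{22}\end{smallmatrix}\bigr]$ conformally with $\Sigma$ and write out the top $s\times s$ block of $M$, namely $M_1 := [\mathfrak{W}^T A'_\theta\mathfrak{V}]_{s\times s}$:
\begin{equation*}
M_1 = \Omega_W^{11}\,S - S\,\Omega_V + S'_\theta.
\end{equation*}
Taking diagonals and using that $\Omega_W^{11}$ and $\Omega_V$ are skew (hence have zero diagonal) immediately gives $S'_\theta=\operatorname{diag}[M_1]$, which is formula~\eqref{lemma:d:1}. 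For the off-diagonal part I would read off the $(i,j)$ and $(j,i)$ entries for $i>j$ as
\begin{align*}
(M_1)_{ij} &= (\Omega_W^{11})_{ij}\,\sigma_j - \sigma_i\,(\Omega_V)_{ij},\\
(M_1)_{ji} &= -(\Omega_W^{11})_{ij}\,\sigma_i + \sigma_j\,(\Omega_V)_{ij},
\end{align*}
by exploiting skew-symmetry. These give a $2\times 2$ linear system in the two unknowns $(\Omega_W^{11})_{ij}$ and $(\Omega_V)_{ij}$; its determinant is $\sigma_j^2-\sigma_i^2$, nonzero by the hypothesis $\sigma_i\neq\sigma_j$ and positivity of the singular values. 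Solving for $(\Omega_V)_{ij}$ and matching to the notation $\bar l_{ij}=(M_1)_{ij}$, $\bar u_{ji}=(M_1)_{ji}$ for $i>j$ yields exactly $(\Omega_V)_{ij}=-(\bar l_2)_{ij}$ with $(\bar l_2)_{ij}$ as in~\eqref{barL2}. Since $\bar L_2$ is strictly lower triangular, the skew-symmetric completion is $\Omega_V=\bar L_2^T-\bar L_2$, giving~\eqref{lemma:d:2}.

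The main obstacle I anticipate is recognising that, although the equation $M_1=\Omega_W^{11}S - S\Omega_V+S'_\theta$ has two unknown skew-symmetric matrices on the right, symmetry lets us decouple them: pairing the $(i,j)$ and $(j,i)$ entries produces a nonsingular $2\times 2$ system exactly when the singular values are distinct, which is precisely the hypothesis of the lemma. Once that observation is in place, the remaining work is bookkeeping: isolating the diagonal to obtain $S'_\theta$, solving for $\Omega_V$, and repackaging the entries as a strictly lower-triangular matrix $\bar L_2$ so that $\Omega_V = \bar L_2^T - \bar L_2$. Note that the block $\Omega_W^{21}$ (and thus $\mathfrak{W}'_\theta$) can be recovered from the bottom block $\Omega_W^{21} S = [\mathfrak{W}^T A'_\theta\mathfrak{V}]_{k\times s}$ using invertibility of $S$ (guaranteed by full column rank of $A$), but as noted in the text $\mathfrak{W}'_\theta$ is not required for the sensitivity recursion and need not be computed.
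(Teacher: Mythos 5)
Your proposal is correct and follows essentially the same route as the paper's proof: differentiate $A=\mathfrak{W}\Sigma\mathfrak{V}^T$, sandwich with $\mathfrak{W}^T$ and $\mathfrak{V}$, use skew-symmetry of $\mathfrak{W}^T\mathfrak{W}'_\theta$ and $\mathfrak{V}^T\mathfrak{V}'_\theta$ to read $S'_\theta$ off the diagonal of $\left[\mathfrak{W}^T A'_\theta\mathfrak{V}\right]_{s\times s}$, and then determine $\Lambda=\mathfrak{V}^T\mathfrak{V}'_\theta$ from the off-diagonal entries so that $\mathfrak{V}'_\theta=\mathfrak{V}\Lambda$. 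Your pairwise $2\times 2$ systems in the $(i,j)$ and $(j,i)$ entries are exactly the paper's triangular-part equations $\bar U=\bar L_1^TS-S\bar L_2^T$ and $\bar L=-\bar L_1S+S\bar L_2$ written entrywise, with the same determinant $\sigma_j^2-\sigma_i^2$ explaining the distinct-singular-values hypothesis, so the difference is purely one of bookkeeping.
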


\begin{proof}
By differentiating~\eqref{numSVD} with respect to $\theta$, we obtain
\begin{equation}\label{numdiffSVD}
A'_\theta=\mathfrak{W}'_\theta\Sigma \mathfrak{V}^T+\mathfrak{W} {\Sigma}'_\theta \mathfrak{V}^T + \mathfrak{W} \Sigma \: (\mathfrak{V}^T)'_{\theta}.
\end{equation}

Having applied a right-multiplier $\mathfrak{V}$ and a left-multiplier $\mathfrak{W}^T$ to equation~\eqref{numdiffSVD}, we have
\begin{equation}\label{baseSVD1}
\mathfrak{W}^T A'_{\theta} \mathfrak{V} = \left[\mathfrak{W}^T \mathfrak{W}'_{\theta}\right]\Sigma +\Sigma'_{\theta}+\Sigma \left[(\mathfrak{V}^T)'_{\theta}\mathfrak{V} \right].
\end{equation}
In deriving the equation above we take into account the properties of any orthogonal matrix $Q$, i.e. $QQ^T=Q^TQ=I$.

It is also easy to show that for any orthogonal matrix $Q$ the product $Q'_{\theta}Q^T$ is a skew symmetric matrix. Indeed, by differentiating both sides of $QQ^T=I$ with respect to $\theta$, we get $Q'_{\theta}Q^T+Q\left(Q^T\right)'_{\theta}=0$, or in the equivalent form $Q'_{\theta}Q^T=-\left(Q'_{\theta}Q^T\right)^T$. The latter implies that the matrix $Q'_{\theta}Q^T$ is skew symmetric.

For the sake of simplicity we introduce the following notations: $\Upsilon= \mathfrak{W}^T \mathfrak{W}'_{\theta}$ and $\Lambda=\mathfrak{V}^T\mathfrak{V}'_{\theta}$. As discussed above, the matrices $\Upsilon \in {\mathbb R}^{(k+s)\times (k+s)}$ and $\Lambda \in {\mathbb R}^{s\times s}$ are skew symmetric, because $\mathfrak{W}$ and $\mathfrak{V}$ are orthogonal matrices. Hence, we have $\Lambda^T = -\Lambda$. Taking into account this fact, we obtain the following partitioning of the matrix form of equation~\eqref{baseSVD1}:
\[
\begin{bmatrix}
\left[\mathfrak{W}^T A'_{\theta} \mathfrak{V}\right]_{s\times s}\\
\left[\mathfrak{W}^T A'_{\theta} \mathfrak{V}\right]_{k\times s}
\end{bmatrix}
\! = \!
\begin{bmatrix}
[\Upsilon]_{s\times s} & [\Upsilon]_{s\times k}\\
[\Upsilon]_{k\times s} & [\Upsilon]_{k\times k}
\end{bmatrix}
\begin{bmatrix}
S\\
0
\end{bmatrix}
\!+\!
\begin{bmatrix}
S'_{\theta}\\
0
\end{bmatrix}
\!-\!
\begin{bmatrix}
S\\
0
\end{bmatrix}
\Lambda.
\]

From the equation above, we derive the formula for the main $s\times s$ block of the matrix product $\mathfrak{W}^T A'_{\theta} \mathfrak{V}$
\begin{equation}\label{baseSVD4:new}
 \left[\mathfrak{W}^T A'_{\theta} \mathfrak{V}\right]_{s\times s} =
[\Upsilon]_{s\times s}S + S'_{\theta} - S\Lambda.
\end{equation}
Hence, the diagonal matrix $S'_{\theta}$ obeys the equation
\begin{equation}\label{baseSVD4}
S'_{\theta} = \left[\mathfrak{W}^T A'_{\theta} \mathfrak{V}\right]_{s\times s} - [\Upsilon]_{s\times s}S + S\Lambda.
\end{equation}

Now, let us discuss formula~\eqref{baseSVD4} in details. Recall the matrices $\Upsilon$ and $\Lambda$ are skew symmetric matrices and, hence, their diagonal entries are equal to zero. The multiplication of any skew symmetric matrix by a diagonal matrix does not change the matrix structure, i.e. the diagonal entries of the matrix products $[\Upsilon]_{s\times s}S$ and $S\Lambda$ are equal to zero as well. Meanwhile, the matrix $\left[\mathfrak{W}^T A'_{\theta} \mathfrak{V}\right]_{s\times s}$ is a full matrix and contains a diagonal part. Hence, from equation~\eqref{baseSVD4} we conclude that diagonal matrix $S'_{\theta}$ is, in fact, a diagonal part of the main $s\times s$ block of the matrix product $\mathfrak{W}^T A'_{\theta} \mathfrak{V}$. This completes the proof of formulas in equation~\eqref{lemma:d:1}.

Finally, we need to compute $\mathfrak{V}'_{\theta}$ where $\Lambda=\mathfrak{V}^T \mathfrak{V}'_{\theta}$. Since $\mathfrak{V}$ is an orthogonal matrix, we obtain $\mathfrak{V}'_{\theta} = \mathfrak{V} \Lambda$. Next, any skew symmetric matrix can be presented as a difference of a strictly lower triangular matrix and its transpose. Hence, the skew symmetric  matrices $\left[\Upsilon\right]_{s\times s}$ and $\Lambda$ can be represented as follows:
\begin{align} \label{L:decomposition}
\left[\Upsilon\right]_{s\times s} & = \bar L_1^T- \bar L_1 & \Lambda & = \bar L_2^T- \bar L_2
\end{align}
where $\bar L_1$ and $\bar L_2$ are strictly lower triangular matrices.

Next, we split the matrix product $\left[\mathfrak{W}^T A'_{\theta} \mathfrak{V}\right]_{s\times s}$ into strictly lower triangular, diagonal and strictly upper triangular parts,  i.e. $\left[\mathfrak{W}^T A'_{\theta} \mathfrak{V}\right]_{s\times s} = \bar L + D + \bar U$. It was proved above that $S'_{\theta}=D$. Taking into account this fact, the substitution of both formulas in~\eqref{L:decomposition} into~\eqref{baseSVD4} yields
\begin{equation}\label{baseSVD4:eq1}
\underbrace{D}_{S'_{\theta}}
 = \underbrace{\bar L + D + \bar U}_{\left[\mathfrak{W}^T A'_{\theta} \mathfrak{V}\right]_{s\times s}} - \underbrace{\left[\bar L_1^T- \bar L_1\right]}_{[\Upsilon]_{s\times s}}S + S \underbrace{\left[\bar L_2^T- \bar L_2\right]}_{\Lambda}.
\end{equation}
Hence, we obtain
\begin{equation}\label{baseSVD6}
\bar L+\bar U
=
[{\bar L}_1^T-\bar L_1]S
-
S[{\bar L}_2^T-\bar L_2].
\end{equation}
In~\eqref{baseSVD6}, the $\bar L$, $\bar L_1$, $\bar L_2$ are strictly lower triangular matrices, the $\bar U$ is a strictly upper triangular matrix and $S$ is a diagonal. Hence, equation~\eqref{baseSVD6} implies
\[
\left\{
\begin{array}{lcl}
\bar U & = & {\bar L}_1^T S-S{\bar L}_2^T,\\
\bar L & = & -{\bar L}_1 S+S{\bar L}_2.\\
\end{array}
\right.
\]
It can be solved with respect to entries of $\bar L_2$ as follows:
\[
(\bar l_2)_{ij}=\displaystyle\frac{{\bar u}_{ji}s_{j}+{\bar l}_{ij}s_i}{s_i^2-s_j^2}, \; i=2, \ldots, s, \; j=1,\ldots,i-1.
\]

The formula above is exactly equation~\eqref{barL2}. Having computed the entries $(\bar l_2)_{ij}$ we can form the matrix $\Lambda = \bar L_2^T- \bar L_2$ in~\eqref{L:decomposition} and, then, compute   the derivative $\mathfrak{V}'_{\theta} = \mathfrak{V} \Lambda$. This completes the proof of~\eqref{lemma:d:2} and Lemma~1.
\end{proof}

\begin{remark}
The assumption of singular values of $A(\theta)$ being distinct for all values of parameter $\theta$ is necessary for avoiding the division by zero in formula~\eqref{barL2}. In future, if possible, we will intend for relaxing this restriction, which reduces the practical applicability of the proposed method.
\end{remark}

For readers' convenience, Algorithm~1 provides a pseudocode for the computational scheme derived in Lemma~1.
\begin{codebox}
\Procname{$\proc{Algorithm 1. Differentiated SVD}(A, A'_{\theta})$}
\zi {\bf Input:} $A$, $A'_{\theta}$ \qquad \Comment{\it \small Pre-array and its derivative}
\li Apply SVD from~\eqref{numSVD} to the pre-array $A$. Save $\mathfrak{W}$, $S$, $\mathfrak{V}$.
\li Compute the matrix product $\mathfrak{W}^T A'_{\theta} \mathfrak{V}$.
\li Extract the main $s\times s$ block $M=\left[\mathfrak{W}^T A'_{\theta} \mathfrak{V}\right]_{s\times s}$.
\li $M=\bar L + D + \bar U$. \Comment{\it \small  Split into strictly lower triangular, diagonal}
\zi \phantom{$M=\bar L + D + \bar U$.} \Comment{\it \small  and strictly upper triangular parts}
\li Given $\bar L$, $\bar U$ and $S$, compute the lower triangular $\bar L_2$ by~\eqref{barL2}.
\li Find $\mathfrak{V}'_{\theta} = \mathfrak{V} \left[ \bar L_2^T-\bar L_2 \right]$.
\li Find $S^{\prime}_{\theta} = D$. Hence, $\Sigma^{\prime}_{\theta} = \left[S^{\prime}_{\theta} \; | \;  0 \right]^T$.
\zi {\bf Output:} $\Sigma$, $\mathfrak{V}$ and $\Sigma'_{\theta}$, $\mathfrak{V}'_{\theta}$ \Comment{\it \small Post-arrays and their derivative}
\end{codebox}

The theoretical result presented in Lemma~1 can be further applied to the SVD factorization-based KF discussed in Section~\ref{filters}. The obtained computational scheme is summarized in Algorithm~2 and shown in the form of a pseudocode. The new ``differentiated'' SVD-based KF extends the underlying SVD-based filter on the derivative computation (with respect to unknown system parameters) for updating the corresponding filter sensitivities equations. The method can be used for replacing the conventional ``differentiated KF'' approach  discussed in Section~\ref{sec:state} by inherently more stable approach, which is preferable for practical implementation. Finally, we would like to remark that any ``differentiated'' filtering technique consists of two parts: i) the underlying KF variant, and ii) its ``differentiated'' extension used for the filter sensitivities computation.

\begin{codebox}
\Procname{$\proc{Algorithm~2. Differentiated SVD-based KF}(\bar x_0, \Pi_0)$}
\zi {\bf Initial Step}  ($k=0$).
\li \; $\Omega = Q_{\Omega}D_{\Omega}Q_{\Omega}^T$ and $R = Q_{R}D_{R}Q_{R}^T$ \Comment{\it \small SVD factorization}
\li \; $\Pi_0 = Q_{\Pi_0}D_{\Pi_0}Q_{\Pi_0}^T$ \qquad \qquad \qquad \quad  \Comment{\it \small SVD factorization}
\li \; Set $Q_{P_{1|0}} = Q_{\Pi_0}$, $D^{1/2}_{P_{1|0}} = D^{1/2}_{\Pi_0}$ and $\hat x_{1|0} = \bar x_0$.
\li \; Set $\Di{Q_{P_{1|0}}}\!\! = \Di{Q_{\Pi_0}}$, $\Di{D^{1/2}_{P_{1|0}}}\!\! = \Di{D^{1/2}_{\Pi_0}}$, $\Di{\hat x_{1|0}} = 0$.
\zi {\bf Measurement Update:}  ($k=1, \ldots, N$).
\li \; Build pre-array $A$ in~\eqref{svd:1} and its derivatives $\Di{A}$, $i=\overline{1,p}$.
\li \; $\left[ \Sigma, \: \mathfrak{V}, \:\Di{\Sigma}, \:\Di{\mathfrak{V}} \right] \: \leftarrow$ \verb"Differentiated SVD"($A$, $\Di{A}$).
\li \; $\left\{ D^{1/2}_{R_{e,k}}, \: \Di{D^{1/2}_{R_{e,k}}} \right\} \: \leftarrow$ read-off from $\Sigma$, $\Di{\Sigma}$ ($i=\overline{1,p}$).
\li \; $\left\{ Q_{R_{e,k}}, \: \Di{Q_{R_{e,k}}} \right\} \: \leftarrow$ read-off from $\mathfrak{V}$, $\Di{\mathfrak{V}}$ ($i=\overline{1,p}$).
\li \; Find $\bar K_{k}$ from~\eqref{svd:K} and $K_k = \bar K_k D^{-1}_{R_{e,k}}Q^T_{R_{e,k}}$.
\li \; $\Di{\bar K_{k}} = \Di{\left(Q_{P_{k|k-1}}D_{P_{k|k-1}}Q^T_{P_{k|k-1}}H^TQ_{R_{e,k}}\right)}$, $i=\overline{1,p}$.
\li \; Build pre-array $A$ in~\eqref{svd:2} and its derivatives $\Di{A}$, $i=\overline{1,p}$.
\li \; $\left[ \Sigma, \: \mathfrak{V}, \:\Di{\Sigma}, \:\Di{\mathfrak{V}} \right] \: \leftarrow$ \verb"Differentiated SVD"($A$, $\Di{A}$).
\li \; $\left\{ D^{1/2}_{P_{k|k}}, \: \Di{D^{1/2}_{P_{k|k}}} \right\} \: \leftarrow$ read-off from $\Sigma$, $\Di{\Sigma}$ ($i=\overline{1,p}$).
\li \; $\left\{ Q_{P_{k|k}}, \: \Di{Q_{P_{k|k}}} \right\} \: \leftarrow$ read-off from $\mathfrak{V}$, $\Di{\mathfrak{V}}$ ($i=\overline{1,p}$).
\li \; Find {\it a posteriori} estimate $\hat x_{k|k}$ and $\bar e_k$ from~\eqref{svd:3}.
\li \; $\Di{\bar e_k}  =  \DDi{Q_{R_{e,k}}^T}\left[z_k- H\hat x_{k|k-1}\right]$
\zi \; $\phantom{\Di{\bar e_k}}  - Q_{R_{e,k}}^T\left[\DDi{H}\hat x_{k|k-1}+H\DDi{\hat x_{k|k-1}}\right]$, \; $i=\overline{1,p}$.
\li \; $\Di{\hat x_{k|k}}  = \Di{\hat x_{k|k-1}} + \DDi{\bar K_{k}}D^{-1}_{R_{e,k}}\bar e_k +  \bar K_{k}D^{-1}_{R_{e,k}}\DDi{\bar e_k}$
\zi \; $\phantom{\Di{\hat x_{k|k}}} - \bar K_{k}D^{-1}_{R_{e,k}}\DDi{D_{R_{e,k}}}D^{-1}_{R_{e,k}}\bar e_k$, \; $i=\overline{1,p}$.
\zi {\bf Time Update:}  ($k=1, \ldots, N$).
\li \; Build pre-array $A$ in~\eqref{svd:5} and its derivatives $\Di{A}$, $i=\overline{1,p}$.
\li \; $\left[ \Sigma, \: \mathfrak{V}, \:\Di{\Sigma}, \:\Di{\mathfrak{V}} \right] \: \leftarrow$ \verb"Differentiated SVD"($A$, $\Di{A}$).
\li \; $\left\{ D^{1/2}_{P_{k+1|k}}, \: \Di{D^{1/2}_{P_{k+1|k}}} \right\} \: \leftarrow$ read-off from $\Sigma$, $\Di{\Sigma}$ ($i=\overline{1,p}$).
\li \; $\left\{ Q_{P_{k+1|k}}, \: \Di{Q_{P_{k+1|k}}} \right\} \: \leftarrow$ read-off from $\mathfrak{V}$, $\Di{\mathfrak{V}}$ ($i=\overline{1,p}$).
\li \; Find {\it a priori} estimate $\hat x_{k+1|k}$ from equation~\eqref{svd:6}.
\li \; $\Di{\hat x_{k+1|k}} = \DDi{F}\hat x_{k|k} + F\DDi{\hat x_{k|k}}+\DDi{B} u_{k}$, $i=\overline{1,p}$.
\zi {\bf End.}
\end{codebox}

At the same manner, one can naturally augment any existing SVD-based KF variant (see, for instance, the algorithms in~\cite{WangSVD1992,KulikovaOnline}) or  potentially new SVD-based KF implementation on the corresponding filter sensitivities computation.

Finally, taking into account the properties of orthogonal matrices, it is not difficult to show that the
negative log likelihood function (LF) given as~\cite{Schweppe1965}:
 \[
{\mathcal L}\left(\theta, Z_1^N\right)= c_0 +
\frac{1}{2} \sum \limits_{k=1}^N \left\{
 \ln\left(\det R_{e,k}\right)+ e_k^T R_{e,k}^{-1}e_k \right\}
\]
can be rewritten in terms of the SVD filter variables $Q_{R_{e,k}}$, $D_{R_{e,k}}$ and $\bar e_k$ appeared in equations~\eqref{svd:1}~--~\eqref{svd:6} as follows:
\begin{equation}
{\mathcal L}\left(\theta, Z_1^N\right) =
c_0 + \frac{1}{2} \sum \limits_{k=1}^N \left\{
  \ln\left(\det D_{R_{e,k}}\right)+ \bar e_k^T D^{-1}_{R_{e,k}}\bar e_k
\right\} \label{svd:LLF}
\end{equation}
where $Z_1^N=\{z_1,\ldots, z_N\}$ is $N$-step measurement history
and $c_0$ is a constant value where $c_0 = \frac{Nm}{2}\ln(2\pi)$.

Taking into account that the matrix $D_{R_{e,k}}$ is diagonal and using the Jacobi's formula, $\D{\left(A^{-1}\right)} = -A^{-1} \DD{A} A^{-1}$,
from~\eqref{svd:LLF} we obtain the expression for the log LF gradient evaluation in terms of the SVD filter variables and their derivatives computed in the newly-developed Algorithm~2 (for each $i=1, \ldots, p$):
\begin{align}
\Di{{\mathcal L}}\left(\theta, Z_1^N\right) & =
\frac{1}{2}\sum \limits_{k=1}^N \left\{
\tr { \DDi{D_{R_{e,k}}}D^{-1}_{R_{e,k}}} +2\DDi{ \bar e_k^T}D^{-1}_{R_{e,k}}\bar e_k
 \right. \nonumber \\
&  \left.  -\bar e_k^TD^{-1}_{R_{e,k}} \DDi{D_{R_{e,k}}} D^{-1}_{R_{e,k}}\bar e_k
\right\}. \label{diff:svd:LLF}
\end{align}

\section{Numerical experiments } \label{experiments}

By using simple test problem, we would like to demonstrate thoroughly each step of the method summarized in Algorithm~1.
\begin{exmp}\label{ex:1}
Given pre-array $A(\theta)$ and its derivative $A'_{\theta}$
\begin{equation*}\label{A:1}
 A(\theta) =
 \left[\begin{array}{rr}
-2\theta  & \sin(\theta) \\
2\theta  & \theta^2 \\
\sin^2{(\theta)}  &  1/3\:\theta^3\\
\theta  & 2\theta^2-1\\
\cos^2{(\theta)} & \theta^3+\theta^2
\end{array}\right],
\end{equation*}
compute the corresponding SVD post-arrays $\Sigma$, $\mathfrak{V}$ and their derivative ${\Sigma}'_{\theta}$, $\mathfrak{V}'_{\theta}$ at the point $\hat \theta=0.5$.
\end{exmp}

Table~1 illustrates each step of the computational scheme in Algorithm~1. To  assess the accuracy of computations, we compute
$l_{\infty} = \left|\left|{(A^T A)}'_{\hat\theta =0.5}-{(\mathfrak{V} \Sigma^2 \mathfrak{V}^T)}'_{\hat\theta=0.5}\right|\right|_{\infty}$. This quantity should be small. Indeed, taking into account the properties of diagonal and orthogonal matrices, from~\eqref{numSVD} we have $A^T A = \mathfrak{V} \Sigma^T \mathfrak{W}^T \mathfrak{W} \Sigma \mathfrak{V}^T = \mathfrak{V} \Sigma^2 \mathfrak{V}^T$. Hence, the derivatives of both sides of the last formula should coincide as well. In our numerical experiment we obtain $l_{\infty} = 1.99 \cdot 10^{-15}$. This justifies the correctness of computations via Algorithm~1 and  confirms the theoretical derivations in Lemma~1.

\begin{table}[h]
\caption{Algorithm~1 illustrative calculations for Example~1}  \label{tab:matrix}
\centering
\begin{tabular}{|l|l|}
\hline
{\bf  Input}  & Pre-array: $\left.A \right|_{\hat \theta = 0.5} = \left[
\begin{smallmatrix*}[r]
 -1.0000  &  0.4794\\
    1.0000  &  0.2500\\
    0.2298  &  0.0417\\
    0.5000 &  -0.5000\\
    0.7702 &   0.3750
\end{smallmatrix*}
\right]^{\phantom{M^M}}$ \\[10pt]
& Pre-array derivative: $\left.A'_{\theta}\right|_{\hat \theta = 0.5}=
\left[
\begin{smallmatrix*}[r]
 -2.0000    &  0.8776\\
    2.0000  &  1.0000\\
    0.8415  &  0.2500\\
    1.0000  &  2.0000\\
   -0.8415  &  1.7500
\end{smallmatrix*}
\right]$  \\[10pt]
\hline
Line 1. &
$\mathfrak{W}=\left[
\begin{smallmatrix*}[r]
-0.6070 &   0.4848  &  0.1556  &  0.2057  &  0.5745\\
    0.5723  &  0.4035  &  0.0539 &  -0.5533  &  0.4478\\
    0.1323  &  0.0735  &  0.9579  &  0.1059  & -0.2197\\
    0.3159  & -0.5593  &  0.0946  &  0.4337  &  0.6247\\
    0.4321  &  0.5329 &  -0.2152  &  0.6724  & -0.1756
\end{smallmatrix*}
\right]^{\phantom{M^M}}$ \\[15pt]
& $\Sigma=\left[
\begin{smallmatrix*}[r]
1.7061    &     0\\
         0  &  0.8185\\
         0     &    0\\
         0      &   0\\
         0     &    0
\end{smallmatrix*}
\right]$, $\mathfrak{V}=\left[
\begin{smallmatrix*}[r]
  0.9967  &  0.0811\\
   -0.0811  &  0.9967
\end{smallmatrix*}
\right]$ \\[10pt]
Line 2. & Compute $M=
\left[
\begin{smallmatrix*}[r]
    2.2959 &   -1.6522 \\
    1.1584 &   0.5691 \\
    0.5177 &  -0.1427 \\
   -0.2470 &  -2.2944 \\
   -1.8181 &  -0.8517
\end{smallmatrix*}
\right]$.  \\[15pt]
Line 3. & Extract $\left[M\right]_{2\times2}  =
\left[
\begin{smallmatrix*}[r]
    2.2959 &   -1.6522 \\
    1.1584 &   0.5691
\end{smallmatrix*}
\right]$ \\[10pt]
Line 4. &
Split $\left[M\right]_{2\times2} \! = \! \left[
\begin{smallmatrix}
 0  &  0\\
    1.1584   &  0
 \end{smallmatrix}
\right]\!+\!
\left[
\begin{smallmatrix}
  2.2959   &  0\\
    0  &  0.5691
\end{smallmatrix}
\right]
\!+\!
\left[
\begin{smallmatrix}
0  &    -1.6522\\
    0  &  0
\end{smallmatrix}
\right]$\!\!\!
\\[10pt]
Line 5. & Compute
$\bar L_2 =
\left[
\begin{smallmatrix}
      0 &  0\\
    0.8348  &     0
\end{smallmatrix}
\right]$\\[10pt]
Line 6. & $\left.\mathfrak{V}'_{\theta}\right|_{\hat \theta = 0.5} = \left[
\begin{smallmatrix*}[r]
0.0677  & -0.8321\\
    0.8321  &  0.0677
\end{smallmatrix*}
\right]$  \\[10pt]
Line 7. & $\left.\Sigma'_{\theta}\right|_{\hat \theta = 0.5} = \left[
\begin{smallmatrix*}[c]
 2.2959   &      0\\
         0  &  0.5691\\
         0    &     0\\
         0     &    0\\
         0      &   0
\end{smallmatrix*}
\right]$ \\[10pt]
\hline
{\bf Output}  &
 Post-arrays:
$\left.\Sigma\right|_{\hat \theta = 0.5} = \left[
\begin{smallmatrix*}[c]
1.7061       &  0 \\
         0    & 0.8185 \\
         0    &     0\\
         0     &    0\\
         0      &   0
\end{smallmatrix*}
\right]^{\phantom{M^M}}$ \\[10pt]
& \phantom{Post-arrays:} $\left. \mathfrak{V}\right|_{\hat \theta = 0.5} = \left[
\begin{smallmatrix*}[r]
    0.9967  & -0.0811 \\
   -0.0811  & -0.9967
\end{smallmatrix*}
\right]$ \\
&  Post-arrays' derivative: $\left.\Sigma'_{\theta}\right|_{\hat \theta = 0.5}$ and  $\left.\mathfrak{V}'_{\theta}\right|_{\hat \theta = 0.5}$ (Lines 6,7) \\
\hline
\end{tabular}
\end{table}

Next, we wish to demonstrate how the novel method for the filter sensitivities evaluation (Algorithm~2) works in practice. For that, we consider the parameter estimation problem where the gradient-based optimization method is applied for finding the optimal value of unknown system parameters. We test the conventional ``differentiated'' KF (Eqs~\eqref{kf:1}~-- \eqref{diff:kf:4} in Section~\ref{sec:state}) and the previously derived SR- and UD-based ``differentiated'' KF variants from~\cite{Kulikova2009IEEE} and~\cite{Tsyganova2013IEEE}, respectively, against the new ``differentiated'' SVD-based KF (Algorithm~2). As discussed in Section~\ref{Sec:SVD:diff}, all ``differentiated'' methods consist of two parts and, hence, they compute the Log LF and its gradient simultaneously. These values are utilized by a gradient-based optimization method for maximizing the log LF with respect to system parameters. Our library of codes is implemented in MATLAB where we use the built-in optimization method \verb"fminunc".

\begin{exmp}\label{ex:2}
Consider a linearized version of the in-track motion dynamic when a satellite travels in a circular orbit~\cite{Rauch1965}:
\begin{align*}
x_{k} & =
\begin{bmatrix}
1 & 1 & 0.5 &  0.5 \\
0 & 1 & 1 & 1 \\
0 & 0 & 1 & 0 \\
0 & 0 & 0 & 0.606
\end{bmatrix}
x_{k-1} + w_{k-1}, & \Omega & = \begin{bmatrix}
0 & 0 & 0 & 0 \\
0 & 0 & 0 & 0 \\
0 & 0 & 0 & 0 \\
0 & 0 & 0 & q_1
\end{bmatrix}  \\
z_k & =
\begin{bmatrix}
1 & 1 & 1 & 1\\
1 & 1 & 1 & 1+\delta
\end{bmatrix}
x_k + v_k, & R & = \begin{bmatrix}
\theta^2 \: \delta^2 & 0 \\
0 & \theta^2 \: \delta^2
\end{bmatrix}
\end{align*}
where $q_1 = 0.63 \cdot 10^{-2}$, $x_0 \sim {\mathcal N}(0, \theta^2 I_4)$ and $\theta$ is the unknown system parameter that needs to be estimated. In contrast to~\cite{Rauch1965}, we wish to test both well-conditioned and ill-conditioned situations. For that, following~\cite{GrewalAndrews2001}, we simulate the roundoff by parameter $\delta$. It is assumed to be
$\delta^2<\epsilon_{roundoff}$, but $\delta>\epsilon_{roundoff}$
where $\epsilon_{roundoff}$ denotes the unit roundoff
error\footnote{Computer roundoff for floating-point arithmetic is
characterized by a single parameter $\epsilon_{roundoff}$,
defined in different sources as the largest number such that either
$1+\epsilon_{roundoff} = 1$ or $1+\epsilon_{roundoff}/2 = 1$ in
machine precision. }. When $\delta \to \epsilon_{roundoff}$, i.e. the machine precision limit, the problem above becomes ill-conditioned. By varying the ill-conditioning parameter $\delta$, we are able to explore some numerical insights of each method assessed.
\end{exmp}

The numerical experiment is organized as follows. For each fixed value of ill-conditioning parameter $\delta$, the SSM in Example~2 is simulated for $\theta^* = 5$ to generate $N=100$ measurements. Next, the unknown system parameter $\theta$ is estimated from the available experimental data, $Z_1^N = \{z_1, \ldots, z_N \}$, by using gradient-based adaptive KF techniques examined, i.e. by four ``differentiated'' KF methods mentioned earlier in this Section. For a fair comparison, each ``differentiated'' algorithm utilizes the same data $Z_1^N$ and the same initial value for the optimization method, $\hat \theta^{(0)} = 1$. Next, the obtained optimal estimate $\hat \theta^*$ is compared with the ``true'' value of $\theta^* = 5$ for assessing the estimation quality of each method. We repeat the experiment $M=100$ times and calculate {\it a posterior} mean of the estimate, the root mean squared error (RMSE) and the mean absolute percentage error (MAPE) over $100$ Monte Carlo runs.

\begin{table*}
\renewcommand{\arraystretch}{1.3}
\caption{Effect of roundoff errors in ill-conditioned test problems in Example~2; exact $\theta^* =5$, $100$ Monte Carlo runs} \label{MC-estimators}
\centering
\begin{tabular}{c||c|c|c||c|c|c||c|c|c||c|c|c}
\hline
&  \multicolumn{3}{c||}{``differentiated" KF} &
\multicolumn{3}{c||}{``differentiated" SR-based KF} & \multicolumn{3}{c||}{``differentiated" UD-based KF} & \multicolumn{3}{c}{``differentiated" SVD-based KF}\\
\cline{2-13}
$\delta$ &  Mean &  RMSE & MAPE\% &  Mean & RMSE & MAPE\% &  Mean & RMSE & MAPE\% &  Mean & RMSE & MAPE\%  \\
\hline
 $ 10^{-1\phantom{0}}$ &  5.0046 &   0.2485  &  3.8829  & 5.0046 &   0.2485  &  3.8829 & 5.0046 &   0.2485  &  3.8829 & 5.0046 &   0.2485  &  3.8829\\
 $ 10^{-2\phantom{0}}$ &  4.9649 &   0.2784 &   4.2892  & 4.9649 &   0.2784 &   4.2883 & 4.9649  &  0.2784 &   4.2883 & 4.9649  &  0.2784 &   4.2883\\
 $ 10^{-3\phantom{0}}$ &  5.2764 &   0.7027 &   9.7757  & 5.0083 &   0.3555 &   5.7217 & 5.0083  &  0.3555 &   5.7217 & 5.0083  &  0.3555 &   5.7217 \\
 $ 10^{-4\phantom{0}}$ &  8.8812 &   4.1440 &   77.623  & 4.9879 &   0.3715 &   5.8595 & 4.9879  &  0.3715 &   5.8596 & 4.9879  &  0.3715 &   5.8597\\
 $ 10^{-5\phantom{0}}$ &  0.2803 &   8.0217 &   $>$100\%& 4.9509 &   0.3352 &   5.6154 & 4.9508  &  0.3353 &   5.6162 & 4.9509  &  0.3352 &   5.6150\\
 $ 10^{-6\phantom{0}}$ & -0.1315 &   7.2403 &   $>$100\%& 4.9310 &   1.0362 &   6.8368 & 4.9323  &  1.0333 &   6.8265 & 5.0288  &  0.3138 &   4.8826\\
 $ 10^{-7\phantom{0}}$ & $-$ & $-$ & $-$ & 4.9298  &  0.3658  &  5.8586 & 4.9268  &  0.3562 &   5.6883 & 4.9249 &   0.3507 & 5.5674 \\
 $ 10^{-8\phantom{0}}$ & $-$ & $-$ & $-$ & $-$ & $-$ & $-$ & 5.0437 &   0.3757  &  6.0712   &   5.0493 & 0.3790 &   6.0946 \\
 $ 10^{-9\phantom{0}}$ & $-$ & $-$ & $-$ & $-$ & $-$ & $-$ & 6.0119 &   1.2179  &  20.762   &   5.9738 & 1.1853 &   20.106 \\
$ 10^{-10\phantom{0}}$ & $-$ & $-$ & $-$ & $-$ & $-$ & $-$ & 6.7496 &   2.6030  &  49.405   &   6.7021 & 2.5286 &   49.252 \\
  \hline
\end{tabular}
\end{table*}

Having carefully analyzed the obtained numerical results summarized in Table~2, we make a few important conclusions.
First, all ``differentiated'' KF variants work equally well when $\delta$ is about $10^{-1}$ and $10^{-2}$, i.e. when the problem is not ill-conditioned. This  confirms that all ``differentiated'' techniques are algebraically equivalent. Second, among all methods examined, the conventional approach (``differentiated'' KF) shows the worst performance. It degrades faster than any other algorithms when $\delta \to \epsilon_{roundoff}$. Furthermore, the line in Table~2 means that MATLAB can not even run the algorithm. Third, we analyze the outcomes obtained by other methods tested and observe that the UD- and SVD-based ``differentiated'' techniques produce a better estimation quality than the SR-based counterpart. This conclusion is reasonable if we recall that in this paper we do not explore the filtering algorithms, but their differential form for the KF sensitivities computation. Any existing ``differentiated'' SR-based scheme requires the triangular matrix inversion $R_{e,k}^{1/2}$ that is a square-root factor of the innovation covariance $R_{e,k}$; see Eq~(6) in~\cite{Kulikova2009IEEE}. In contrast, the UD- and SVD-based ``differentiated'' methods involve the inversion of only diagonal matrix $D_{R_{e,k}}$; see~\eqref{diff:svd:LLF} and Eq~(8) in~\cite{Tsyganova2013IEEE}. Finally, we observe that the new SVD-based approach slightly outperforms the UD-based counterpart  when $\delta \to \epsilon_{roundoff}$.

In summary, the previously derived UD- and the new SVD-based techniques provide the best estimation quality when solving parameter estimation problem by the gradient-based adaptive filtering methodology. This creates a strong background for their practical use. In our ill-conditioned test example, the new SVD-based approach even slightly outperforms the UD-based counterpart.


\end{document}